\newcommand{\cU}{\mathcal{U}}
\newcommand{\R}{\mathbb{R}}
\newcommand{\X}{\mathcal{X}}
\newtheorem{theorem}{Theorem}
\newtheorem{lemma}[theorem]{Lemma}
\newtheorem{corollary}[theorem]{Corollary}
\newtheorem{example}[theorem]{Example}
\newcommand{\BIGOP}[1]{\mathop{\mathchoice%
{\raise-0.22em\hbox{\huge $#1$}}%
{\raise-0.05em\hbox{\Large $#1$}}{\hbox{\large $#1$}}{#1}}}
\begin{document}

\title{On Scenario Aggregation to Approximate Robust Optimization Problems}

\author[1]{Andr\'e Chassein}
\affil[1]{Fachbereich Mathematik, University of Kaiserslautern, Germany}

\author[2]{Marc Goerigk}
\affil[2]{Department of Management Science, Lancaster University, United Kingdom}

\date{}

\maketitle

\begin{abstract}
As most robust combinatorial min-max and min-max regret problems with discrete uncertainty sets are NP-hard, research into approximation algorithm and approximability bounds has been a fruitful area of recent work. A simple and well-known approximation algorithm is the midpoint method, where one takes the average over all scenarios, and solves a problem of nominal type. Despite its simplicity, this method still gives the best-known bound on a wide range of problems, such as robust shortest path, or robust assignment problems.

In this paper we present a simple extension of the midpoint method based on scenario aggregation, which improves the current best $K$-approximation result to an $(\varepsilon K)$-approximation for any desired $\varepsilon > 0$. Our method can be applied to min-max as well as min-max regret problems.
\end{abstract}

\textbf{Keywords: } robust optimization; approximation algorithms; min-max regret

\section{Introduction}

We consider uncertain optimization problems of the form
\[ \min \{c^t x : x\in\X\subseteq\R^n \} \tag{P} \]
where $\X$ is the set of feasible solutions, and $c$ is an uncertain objective function that comes from some uncertainty set $\cU$. Two popular approaches to reformulate such an uncertain problem to a robust counterpart are min-max optimization
\[ \min_{x\in\X} \max_{c\in\cU} c^t x \]
and min-max regret optimization
\[ \min_{x\in\X} \max_{c\in\cU} \left( c^t x - opt(c) \right)\]
where $opt(c) = \min_{y\in\X} c^ty$ is used as an additional normalization term. In particular for combinatorial problems where $\X\subseteq\{0,1\}^n$, problems of this type have received significant attention in the research literature, see, e.g., the surveys \cite{KouYu97,Aissi2009,kasperski2016robust} on this topic. In this paper we focus on the case of discrete uncertainty, i.e., the uncertainty set is of the form $\cU = \{ c_1,\ldots,c_K\} \subset \mathbb{R}^n$.

For most combinatorial problems where the deterministic version can be solved in polynomial time (e.g., shortest path, spanning tree, selection), both robust counterparts turn out to be NP-hard. Therefore, the approximability of such problems has been analyzed (see, e.g., \cite{Aissi2007281}). 

A popular approximation algorithm due to its generality and simplicity is the midpoint method (see, e.g., 
\cite{chassein2015new}). The idea is to define a new scenario $\hat{c} := \frac{1}{K} \sum_{i\in[K]} c_i$, which is the average of all scenarios in the uncertainty set, and to solve a nominal problem with respect to these costs. This method is known to be a $K$-approximation algorithm for both min-max and min-max regret optimization. In the case of interval uncertainty, this approach even gives a 2-approximation \cite{kasperski2006approximation,conde2012constant}. Quite surprisingly, this is still the best known approximation guarantee for several problems, see Table~\ref{results}. In column ''Fix $K$'', we denote if an FPTAS is known for the problem with fixed number of scenarios $K$.

\begin{table}
\centering
\begin{tabular}{r|r|cc|c}
 & Problem & \phantom{additional}LB\phantom{additional} & \phantom{additional}UB\phantom{additional} & Fix $K$\\
\hline
\parbox[t]{2mm}{\multirow{5}{*}{\rotatebox[origin=c]{90}{Min-Max}}} & Shortest Path & $\mathcal{O}(\log^{1-\varepsilon} K)$ & $K$  &  \Checkmark \\
 & Spanning Tree & $\mathcal{O}(\log^{1-\varepsilon} K)$ & $\mathcal{O}(\log^2 n)$ & \Checkmark\\
 & $s$-$t$ Cut & $\mathcal{O}(\log^{1-\varepsilon} K)$ & $K$  & \\
 & Assignment & $\mathcal{O}(\log^{1-\varepsilon} K)$ & $K$  & \\
 & Selection & $\mathcal{O}(1)$ & $\mathcal{O}(\log K / \log \log K)$ & \Checkmark\\
 & Knapsack & $\mathcal{O}(1)$ & - & \Checkmark \\
\hline
\parbox[t]{2mm}{\multirow{5}{*}{\rotatebox[origin=c]{90}{Regret}}} & Shortest Path &  $\mathcal{O}(\log^{1-\varepsilon} K)$ & $K$  & \Checkmark\\
 & Spanning Tree &  $\mathcal{O}(\log^{1-\varepsilon} K)$ & $K$  & \Checkmark\\
 & $s$-$t$ Cut &  $\mathcal{O}(\log^{1-\varepsilon} K)$ & $K$ &  \\
 & Assignment &  $\mathcal{O}(\log^{1-\varepsilon} K)$ & $K$  & \\
 & Selection & $\mathcal{O}(1)$ & $K$ & \Checkmark \\
 & Knapsack & not approx. & not approx. & 
\end{tabular}
\caption{Current best known approximation guarantees (UB) for unbounded number of scenarios $K$, and best known inapproximability results (LB) (see \cite{kasperski2016robust}).}\label{results}
\end{table}

In this paper a simple improvement of the midpoint approach is presented, where the basic idea is not to aggregate all scenarios into a single scenario, but into a sufficiently small set of scenarios instead. We show that if the min-max problem for a constant number of scenarios is sufficiently approximable, then there is a polynomial-time $\varepsilon K$-approximation for any constant $\varepsilon > 0$. With a slight modification, this also holds for min-max regret. This result hence improves all entries of Table~\ref{results} where the best-known approximation is $K$ and the column ''Fix $K$'' is checked. Interestingly, this also leads to the first-ever approximation algorithm for min-max knapsack problems with unbounded~$K$.

Note that this method is not a PTAS. While PTAS exist for most problems when $K$ is fixed, they have exponential runtime in $K$. Our approach remains polynomial in $K$, but does not give a constant approximation guarantee.

The remainder of this paper is structured as follows. In Section~\ref{minmax} we present our improved approximation algorithm in the case of min-max robustness, and discuss its application to min-max regret in Section~\ref{minmaxregret}. We describe a small computational experiment on our approach in Section~\ref{experiment} before we conclude the paper in Section~\ref{conc}. 

\section{Min-Max Approximation}
\label{minmax}

In this section, we show how to improve the $K$-approximation algorithm for the min-max problem to a $\varepsilon K$-approximation algorithm for any constant $\varepsilon>0$, if a 2-approximation is available for a fixed number of scenarios. The basic idea is the following. Let us assume we have $K=16$ scenarios. Solving the robust problem with all 16 scenarios would yield a 1-approximation (i.e., an optimal solution). Solving the problem with only one aggregated scenario gives a 16-approximation. We show that intermediate scenario aggregations also yield intermediate approximation guarantees (see Figure~\ref{figex}). 

\begin{figure}[htbp]
\centering
\includegraphics[width=\textwidth]{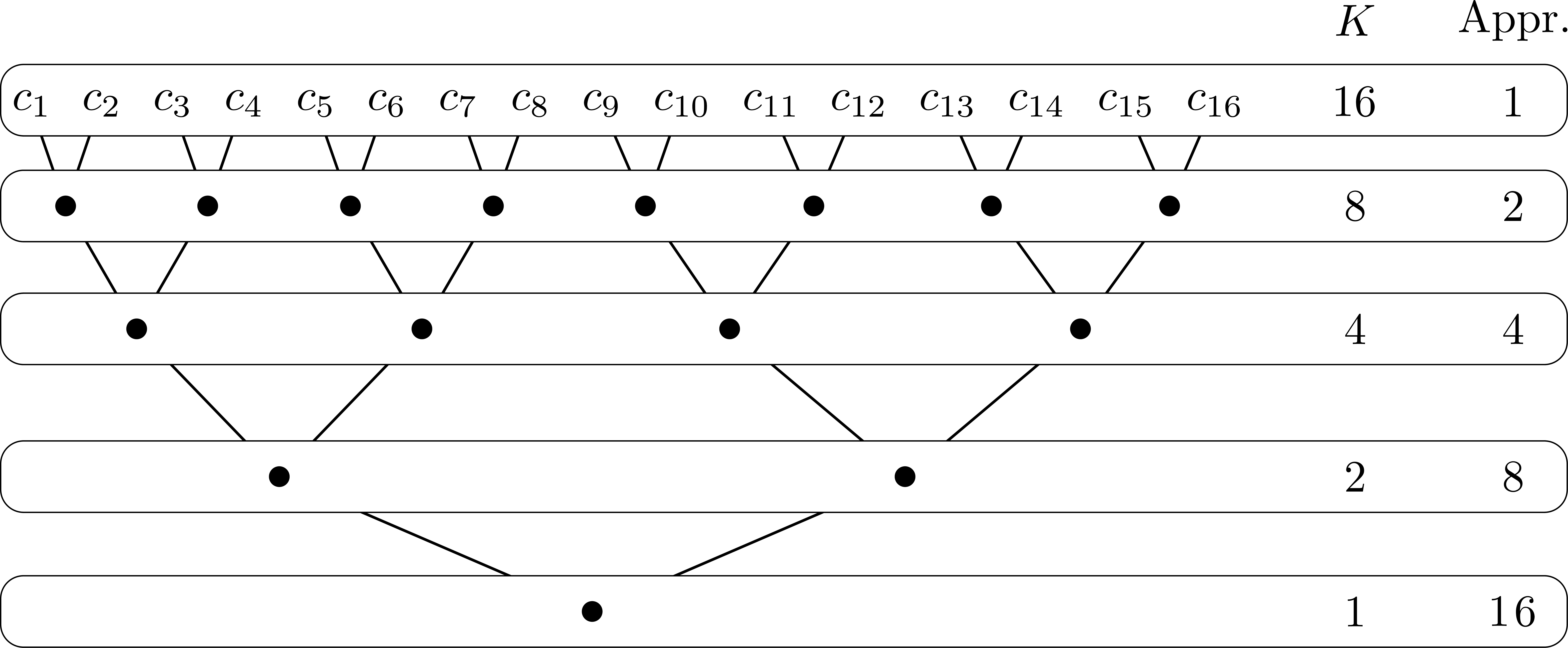}
\caption{Basic aggregation scheme.}\label{figex}
\end{figure}

Now let us assume we would like to have a $K/2$-approximation algorithm. We could aggregate to two scenarios, and solve the resulting problem. However, solving a min-max problem with only two scenarios is usually already NP-hard. Hence, we aggregate to four scenarios instead (which would give a $K/4$-approximation, which is more than we need), and solve this problem with an algorithm that guarantees a 2-approximation. In total, this method then yields a $2\cdot K/4 = K/2$-approximation. In the following, we explain the details of this procedure.

For simplicity, we assume $K = 2^k$ here, but our results readily extend to any $K$. Let any partition of $[K]$ into sets $S_j$ with cardinality $2$ be given $j \in [K/2]$. For each $S_j = \{j_1, j_2\}$, set $\overline{c}_j = \frac{1}{2}( c_{j_1} + c_{j_2} )$, i.e., $\overline{c}_j$ is the midpoint scenario of scenario set $S_j$.

\begin{lemma}
Let $\overline{x}$ be an optimal solution for the min-max problem with scenario set $\overline{\cU} = \{ \overline{c}_1, \ldots, \overline{c}_{K/2} \}$. Then, $\overline{x}$ is a 2-approximation for the min-max problem with scenario set $\cU$.
\label{lem_min_max_half}
\end{lemma}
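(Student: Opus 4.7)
The plan is to compare $\max_{i \in [K]} c_i^t \overline{x}$ with $\max_{i \in [K]} c_i^t x^*$, where $x^*$ denotes an optimal solution of the min-max problem over the original scenario set $\cU$, by using $\overline{\cU}$ as a bridge. I expect the argument to reduce to two sandwich inequalities between the aggregated worst case and the original worst case, applied once to $\overline{x}$ and once to $x^*$, and then to be closed by the optimality of $\overline{x}$ on $\overline{\cU}$.

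First I would establish, for any feasible $x \in \X$ with nonnegative costs (the standard setting for the combinatorial problems in Table~\ref{results}), the inequalities
\[
\tfrac{1}{2}\max_{i \in [K]} c_i^t x \;\le\; \max_{j \in [K/2]} \overline{c}_j^t x \;\le\; \max_{i \in [K]} c_i^t x.
\]
The right inequality comes from $\overline{c}_j^t x = \tfrac12(c_{j_1}^t x + c_{j_2}^t x) \le \max(c_{j_1}^t x, c_{j_2}^t x)$ for each pair $S_j = \{j_1,j_2\}$, followed by a max over $j$. The left inequality uses $\overline{c}_j^t x \ge \tfrac12 \max(c_{j_1}^t x, c_{j_2}^t x)$, and then picks the pair $S_j$ containing the index $i^*$ that attains $\max_i c_i^t x$.

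With the sandwich in hand, the claim follows from the chain
\[
\max_{i \in [K]} c_i^t \overline{x} \;\le\; 2\max_{j \in [K/2]} \overline{c}_j^t \overline{x} \;\le\; 2\max_{j \in [K/2]} \overline{c}_j^t x^* \;\le\; 2\max_{i \in [K]} c_i^t x^*,
\]
where the first step is the left sandwich applied to $\overline{x}$, the middle step is the optimality of $\overline{x}$ for the aggregated min-max problem (so $x^*$ can only be worse), and the last step is the right sandwich applied to $x^*$.

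The main obstacle I anticipate is really only the implicit nonnegativity assumption: the lower sandwich $\overline{c}_j^t x \ge \tfrac12 \max(c_{j_1}^t x, c_{j_2}^t x)$ relies on $c_{j_1}^t x, c_{j_2}^t x \ge 0$ and can fail otherwise. Since all the problems highlighted in the paper have nonnegative objective values on feasible solutions, this is harmless in context but should be stated or assumed explicitly; beyond this, the proof is a short computation.
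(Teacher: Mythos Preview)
Your proof is correct and follows essentially the same route as the paper: the paper writes the chain $\max_i c_i^t \overline{x} \le c_{j_1^*}^t\overline{x}+c_{j_2^*}^t\overline{x} \le 2\max_j \overline{c}_j^t\overline{x} \le 2\max_j \overline{c}_j^t x^* \le 2\max_i c_i^t x^*$ directly, which is exactly your sandwich inequalities specialized to $\overline{x}$ and $x^*$ and linked by optimality. Your explicit remark about the nonnegativity assumption is a useful clarification; the paper uses it implicitly in the step $c_{i^*}^t\overline{x}\le c_{j_1^*}^t\overline{x}+c_{j_2^*}^t\overline{x}$ without stating it.
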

\begin{proof}
Let $\overline{x}$ be an optimal solution for $\overline{\cU}$, and $x^*$ an optimal solution for $\cU$. Let $i^* = \operatorname{argmax}_{i\in[K]}c_i^t\overline{x}$ be the index of the worst-case scenario in $\cU$ with respect to $\overline{x}$, and choose $j^*$ such that $i^* \in S_{j^*} = \{j^*_1,j^*_2\}$. Then
\begin{align*}
\max_{i\in[K]} c_i^t \overline{x} &= c_{i^*}^t \overline{x} \leq c_{j^*_1}^t \overline{x} +c_{j^*_2}^t \overline{x} 
\le 2 \max_{j\in [K/2]} \left(\frac{c_{j_1} + c_{j_2}}{2}\right)^t \overline{x} \\
&= 2 \max_{j\in[K/2]} \overline{c}_j^t \overline{x} \leq 2 \max_{j\in[K/2]} \overline{c}_j^t x^* \leq 2 \max_{i\in[K]} c_i^t x^* 
\end{align*}
\end{proof}

We repeatedly apply Lemma~\ref{lem_min_max_half} to reduce the number of scenarios. Denote by $\overline{\cU}(k)$ the original scenario set $\cU$ containing all scenarios. After the first level of aggregation we end up with scenario set $\overline{\cU}(k-1)$ containing $2^{k-1}$ scenarios. Repeating the aggregation process we create sets $\overline{\cU}(\ell)$ for $\ell$ from $k$ to $0$.

\begin{corollary}\label{cor}
Applying Lemma~\ref{lem_min_max_half} repeatedly, we get a scenario set $\overline{\cU}(\ell)$ with $2^\ell$ scenarios such that solving the min-max problem with respect to $\overline{\cU}(\ell)$ gives a $2^{k-\ell} = K / 2^\ell$-approximation for the min-max problem with scenario set $\cU$.
\end{corollary}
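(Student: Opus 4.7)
The plan is to obtain the $2^{k-\ell}$ bound by chaining together $k-\ell$ applications of Lemma~\ref{lem_min_max_half}, one for each aggregation level between $\overline{\cU}(\ell)$ and $\overline{\cU}(k) = \cU$. A naive induction on the approximation ratio itself runs into trouble: a $2$-approximation on $\overline{\cU}(\ell+1)$ does not automatically compose with a $2^{k-\ell-1}$-approximation on $\cU$, because the reference scenario sets differ. I would therefore work instead with the \emph{pointwise} inequality already embedded in the proof of Lemma~\ref{lem_min_max_half}.

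Concretely, the key step $c_{i^*}^t \overline{x} \leq c_{j^*_1}^t \overline{x} + c_{j^*_2}^t \overline{x}$ in that proof uses only non-negativity of each $c_i^t \overline{x}$ and the fact that $i^* \in \{j^*_1, j^*_2\}$; it is independent of any optimality assumption on $\overline{x}$. So the argument actually shows that for every feasible $x$ and every aggregation level $m$,
\[ \max_{c \in \overline{\cU}(m)} c^t x \;\leq\; 2 \max_{c \in \overline{\cU}(m-1)} c^t x. \]
I would iterate this $k-\ell$ times (a straightforward induction on $m$ from $k$ down to $\ell+1$) to obtain, for any $x \in \X$,
\[ \max_{c \in \cU} c^t x \;\leq\; 2^{k-\ell} \max_{c \in \overline{\cU}(\ell)} c^t x. \]

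With this in hand, let $\overline{x}_\ell$ be an optimal solution for the min-max problem on $\overline{\cU}(\ell)$ and $x^*$ an optimal solution for $\cU$. The proof then closes via the three-step chain
\[ \max_{c \in \cU} c^t \overline{x}_\ell \;\leq\; 2^{k-\ell} \max_{c \in \overline{\cU}(\ell)} c^t \overline{x}_\ell \;\leq\; 2^{k-\ell} \max_{c \in \overline{\cU}(\ell)} c^t x^* \;\leq\; 2^{k-\ell} \max_{c \in \cU} c^t x^*, \]
where the first step is the iterated pointwise bound applied to $\overline{x}_\ell$, the second uses optimality of $\overline{x}_\ell$ on the aggregated instance (with $x^*$ as a feasible competitor, since both problems share $\X$), and the third holds because each $\overline{c} \in \overline{\cU}(\ell)$ is an average of scenarios from $\cU$ and is therefore bounded above by the worst-case value over $\cU$.

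The only real subtlety, and where I expect a careful reader to pause, is recognizing that the pointwise form of Lemma~\ref{lem_min_max_half} is the right object to iterate, rather than trying to compose approximation ratios level by level. Once that observation is made, the remaining steps are routine.
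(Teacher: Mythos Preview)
Your proposal is correct and is precisely the careful elaboration the paper leaves implicit: the corollary is stated without proof, with the justification being simply ``applying Lemma~\ref{lem_min_max_half} repeatedly.'' Your observation that one should iterate the pointwise inequality $\max_{c\in\overline{\cU}(m)} c^t x \le 2\max_{c\in\overline{\cU}(m-1)} c^t x$ (valid for any feasible $x$, under the implicit non-negativity assumption) rather than approximation ratios is exactly the right way to make this rigorous, and your closing three-step chain matches the structure of the proof of Lemma~\ref{lem_min_max_half} itself.
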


We present an instance where the approximation guarantee obtained in Corollary~\ref{cor} is tight for the min-max shortest path problem. Let $K=2^k$ be the number of scenarios and $2^\ell$ the number of scenarios that are used in the aggregation. Consider the instance of the shortest path problem presented in Figure~\ref{fig_tight_example}. The top path is divided into $2^\ell$ blocks of $r:=2^{k-\ell}$ edges. All edges in the $i^{\text{th}}$ block have cost $1$ in scenario $(i-1)\cdot r +1$ and $0$ cost in all other scenarios. Hence, the objective value of the top path is equal to $r$. the cost structure for the bottom path is different: The $i^{\text{th}}$ edge of the bottom path has cost $1$ in the $i^{\text{th}}$ scenario and cost $0$ in all other scenarios. Hence, the objective value of the bottom path is $1$. Consider the aggregation schema as in Figure~\ref{figex}. For both paths it holds that after the aggregation the cost of an edge of the $i^{\text{th}}$ block has cost $\frac{1}{r}$ in the $i^{\text{th}}$ aggregated scenario and $0$ in all other aggregated scenarios. Hence, both paths are identical with respect to the aggregated scenarios and the optimal solution of the aggregated problem may consist of the top instead of the bottom path. This leads to a gap of $2^{k-\ell} = K/2^\ell$.

\begin{figure}
\centering
\begin{tikzpicture}[scale= 1.0,
     every state/.style={ 
       minimum size=1em, 
       fill=white, 
       text=black 
     }, 
     node distance=3cm 
   ] 
 	\tikzstyle{empty} = [rectangle,fill = white, minimum size = 0pt,inner sep=0pt];

	\node[state] (s) at (-0.5,0) {$s$}; 

	\node[state] (1t) at (1,1) {}; 
	\node[state] (2t) at (2,1) {}; 
	\node[state] (3t) at (3,1) {}; 
	\node[state] (4t) at (4,1) {}; 
	\node[state] (5t) at (5,1) {}; 
	\node[state] (6t) at (6,1) {}; 
	\node[state] (7t) at (7,1) {}; 
	\node[state] (8t) at (8,1) {}; 
	\node[state] (10t) at (10,1) {}; 

	\node[state] (1b) at (1,-1) {}; 
	\node[state] (2b) at (2,-1) {}; 
	\node[state] (3b) at (3,-1) {}; 
	\node[state] (4b) at (4,-1) {}; 
	\node[state] (5b) at (5,-1) {}; 
	\node[state] (6b) at (6,-1) {}; 
	\node[state] (7b) at (7,-1) {}; 
	\node[state] (8b) at (8,-1) {}; 
	\node[state] (10b) at (10,-1) {}; 
	
	\node[state] (t) at (11.5,0) {$t$}; 
	
	\path[->] (s) edge node[above left] {$e_1$} (1t);     
	\path[->] (s) edge node[below left] {$e_1$} (1b);     
		
	\path[->] (1t) edge node[above] {$e_1$} (2t);     
	\path[->] (1b) edge node[below] {$e_2$} (2b);     

	\path[-,dashed] (2t) edge (3t);     
	\path[-,dashed] (2b) edge (3b);     

	\path[->] (3t) edge node[above] {$e_1$} (4t);     
	\path[->] (3b) edge node[below] {$e_{r}$} (4b);     

	\path[->] (4t) edge node[above] {$e_{r+1}$} (5t);     
	\path[->] (4b) edge node[below] {$e_{r+1}$} (5b);     

	\path[->] (5t) edge node[above] {$e_{r+1}$} (6t);     
	\path[->] (5b) edge node[below] {$e_{r+2}$} (6b);   

	\path[-,dashed] (6t) edge (7t);     
	\path[-,dashed] (6b) edge (7b);       

	\path[->] (7t) edge node[above] {$e_{r+1}$} (8t);     
	\path[->] (7b) edge node[below] {$e_{2r}$} (8b);     

	\path[-,dashed] (8t) edge (10t);     
	\path[-,dashed] (8b) edge (10b);       
   	   	
	\path[->] (10t) edge node[above right] {$e_{K-r+1}$} (t);     
	\path[->] (10b) edge node[below right] {$e_K$} (t);

\end{tikzpicture}
\caption{An instance of the min-max shortest path problem for which the approximation guarantee of Corollary~\ref{cor} is tight. All edges share a similar cost structure: The cost is $1$ in one scenario and $0$ in all other scenarios. We represent this by the unit vectors $e_1,\dots,e_K$.}
\label{fig_tight_example}
\end{figure}
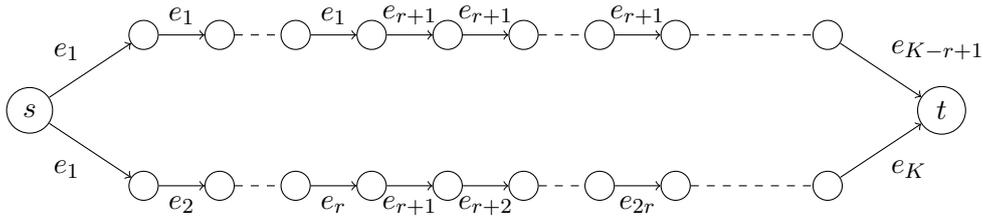

\begin{lemma}\label{lemma2}
A solution that is an $\alpha$-approximation for $\overline{\cU}(\ell)$ is also an $(\alpha K /2^\ell$)-approximation for~$\cU$.
\end{lemma}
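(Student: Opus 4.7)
The plan is to extract from the proof of Lemma~\ref{lem_min_max_half} two pointwise inequalities valid for every feasible $y$, and then chain them through $k-\ell$ levels of aggregation.

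Specifically, I would first observe that the argument in Lemma~\ref{lem_min_max_half} really proves, for every $y\in\X$ (not just for the optimizer $\overline{x}$), the bound
\[ \max_{i\in[K]} c_i^t y \;\le\; 2 \max_{j\in[K/2]} \overline{c}_j^t y, \]
since the key step $c_{i^*}^t y \le c_{j^*_1}^t y + c_{j^*_2}^t y = 2\overline{c}_{j^*}^t y$ only uses non-negativity of $c_i^t y$, the standing assumption for these combinatorial min-max problems. Dually, because every $\overline{c}_j$ is a convex combination of the $c_i$'s, $\max_{j\in[K/2]} \overline{c}_j^t y \le \max_{i\in[K]} c_i^t y$. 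Applying both facts inductively through the aggregation tower $\overline{\cU}(k),\overline{\cU}(k-1),\ldots,\overline{\cU}(\ell)$ yields, for every $y\in\X$,
\[ \max_{i\in[K]} c_i^t y \;\le\; \frac{K}{2^\ell}\max_{\bar c\in\overline{\cU}(\ell)}\bar c^t y \qquad\text{and}\qquad \max_{\bar c\in\overline{\cU}(\ell)}\bar c^t y \;\le\; \max_{i\in[K]} c_i^t y. \]

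With these in hand the proof collapses to a four-step chain. Let $x$ be the $\alpha$-approximation for $\overline{\cU}(\ell)$, let $\overline{x}^*$ be an optimum for $\overline{\cU}(\ell)$, and let $x^*$ be an optimum for $\cU$. Then
\begin{align*}
\max_{i\in[K]} c_i^t x &\le \frac{K}{2^\ell}\max_{\bar c\in\overline{\cU}(\ell)}\bar c^t x \;\le\; \frac{\alpha K}{2^\ell}\max_{\bar c\in\overline{\cU}(\ell)}\bar c^t \overline{x}^* \\
&\le \frac{\alpha K}{2^\ell}\max_{\bar c\in\overline{\cU}(\ell)}\bar c^t x^* \;\le\; \frac{\alpha K}{2^\ell}\max_{i\in[K]} c_i^t x^*,
\end{align*}
where the four inequalities use, respectively, the iterated blow-up bound, the $\alpha$-approximation property of $x$ on $\overline{\cU}(\ell)$, optimality of $\overline{x}^*$ on $\overline{\cU}(\ell)$, and the iterated convex-combination bound.

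The only point requiring some care is the induction that lifts the per-level factor of $2$ to the factor $K/2^\ell$. Because Lemma~\ref{lem_min_max_half} as stated refers to the \emph{optimal} solution of the aggregated problem, I would not cite it as a black box but instead restate the pointwise inequality explicitly, and then observe that the aggregation step from $\overline{\cU}(\ell+1)$ to $\overline{\cU}(\ell)$ is structurally identical, giving a factor of $2$ per level and $2^{k-\ell}=K/2^\ell$ overall. Everything else in the proof is immediate.
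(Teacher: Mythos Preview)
Your proof is correct and follows essentially the same route the paper intends: the paper's own proof is the single line ``Analogously to the proof of Lemma~\ref{lem_min_max_half},'' and what you have written is precisely that analogy spelled out in full, including the two pointwise bounds (the blow-up by $K/2^\ell$ under non-negativity, and the convex-combination bound in the other direction) chained with the $\alpha$-approximation and optimality steps. Your explicit remark that the key inequality $c_{i^*}^t y \le c_{j_1^*}^t y + c_{j_2^*}^t y$ requires $c_i^t y \ge 0$ is a point the paper leaves implicit, so if anything your write-up is more careful than the original.
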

\begin{proof}
Analogously to the proof of Lemma~\ref{lem_min_max_half}.
\end{proof}

\noindent
We can now state the main result of this section.

\begin{theorem}
Let a constant $0 \le \varepsilon \leq 1$ be given. If there exists a 2-approximation algorithm for the min-max problem with a fixed number of scenarios, there exists a polynomial-time algorithm that gives an $(\varepsilon K)$-approximation for the min-max problem.
\label{thm_eps_k_approx_min_max}
\end{theorem}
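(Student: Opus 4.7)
The plan is to choose the aggregation depth $\ell$ as a function of $\varepsilon$ so that the combined bound from Corollary~\ref{cor} and Lemma~\ref{lemma2} comes out to exactly $\varepsilon K$, while keeping the reduced scenario set small enough that the assumed fixed-$K$ 2-approximation runs in polynomial time.

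Specifically, given $\varepsilon$, I would set $\ell := \lceil \log_2(2/\varepsilon) \rceil$, which is a constant depending only on $\varepsilon$ and not on $K$ or the instance size. Starting from $\cU = \overline{\cU}(k)$ (with $K = 2^k$), I would apply the midpoint aggregation of Lemma~\ref{lem_min_max_half} a total of $k - \ell$ times to produce the scenario set $\overline{\cU}(\ell)$ of size $2^\ell$. Each aggregation is a trivial averaging operation, so this preprocessing runs in time polynomial in $K$ and $n$. Then I invoke the assumed 2-approximation on the min-max problem over $\overline{\cU}(\ell)$; because $|\overline{\cU}(\ell)| = 2^\ell$ is a constant, this invocation is polynomial-time by assumption, and it returns a solution $\hat x$ that is a $2$-approximation for $\overline{\cU}(\ell)$. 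By Lemma~\ref{lemma2}, $\hat x$ is a $(2 K / 2^\ell)$-approximation for $\cU$, and the choice of $\ell$ gives
\[ \frac{2K}{2^\ell} \;\le\; \frac{2K}{2/\varepsilon} \;=\; \varepsilon K, \]
as required.

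For general $K$ that is not a power of two, I would pad the scenario set by duplicating an arbitrary scenario (say $c_1$) until $|\cU|$ becomes the next power of two; this at most doubles $K$, does not change the min-max value for any $x$, and only shifts the required $\ell$ by at most one. Edge cases where $K \le 2^\ell$ are handled directly by the fixed-$K$ 2-approximation on $\cU$ itself, which already yields a $2$-approximation, hence an $\varepsilon K$-approximation since $\varepsilon K \ge 2$ in that regime.

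I do not expect a real obstacle here: the only substantive point is to check that $2^\ell$ depends only on $\varepsilon$, so that the call to the fixed-$K$ oracle is on an instance of constant scenario cardinality and therefore polynomial. Everything else is a direct composition of the $K/2^\ell$-factor from Corollary~\ref{cor} (produced by aggregation alone) with the factor $2$ from the oracle, packaged by Lemma~\ref{lemma2}.
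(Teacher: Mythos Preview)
Your proposal is correct and follows essentially the same approach as the paper: your choice $\ell = \lceil \log_2(2/\varepsilon)\rceil$ is literally the paper's $\ell' = \lceil \log(1/\varepsilon) + 1\rceil$ rewritten, and both proofs then compose the $K/2^\ell$ factor from aggregation with the factor~$2$ from the fixed-scenario oracle via Lemma~\ref{lemma2}. You are in fact slightly more careful than the paper in treating general $K$ by padding and in noting that the aggregation preprocessing is polynomial; the only quibble is that your claim ``$\varepsilon K \ge 2$ in that regime'' is not literally true for very small $K$, but this is a defect of the theorem statement itself (an $(\varepsilon K)$-approximation is only meaningful once $\varepsilon K \ge 1$), not of your argument.
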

\begin{proof}
Let $\varepsilon \leq 1$ be constant. We choose $\ell' := \lceil\log \frac{1}{\varepsilon} + 1 \rceil$. According to Corollary~\ref{cor} we construct the set $\overline{\cU}(\ell')$ with $2^{\ell'}$ scenarios. Using the 2-approximation algorithm for the min-max problem with a fixed number of scenarios, we find a 2-approximation for $\overline{\cU}(\ell')$. Using Lemma~\ref{lem_min_max_half}, we conclude that the solution is a $2\cdot 2^{k-\ell} = 2K/ 2^{\lceil \log\frac{1}{\varepsilon}+1 \rceil} \leq  K/\frac{1}{\varepsilon} = \varepsilon K$-approximation. Note that the running time of this procedure is polynomial since the value of $\epsilon$ and, therefore, also $\ell'$, is fixed.
\end{proof}

\begin{corollary}
For the min-max shortest path, spanning tree, selection, and knapsack problem with unbounded number of scenarios $K$, there exists a polynomial-time $(\varepsilon K)$-approximation algorithm for any fixed $\varepsilon > 0$.
\end{corollary}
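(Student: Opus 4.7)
The plan is to apply Theorem~\ref{thm_eps_k_approx_min_max} directly to each of the four problems, so the only thing to verify is its hypothesis: the existence of a 2-approximation algorithm for the min-max version of each problem when the number of scenarios is a fixed constant.

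First I would appeal to Table~\ref{results}, which records that for min-max shortest path, spanning tree, selection, and knapsack, the ``Fix $K$'' column is checked, i.e., an FPTAS is known for each of these four problems when $K$ is a fixed constant (these FPTAS results are collected in the surveys cited in the introduction, in particular \cite{Aissi2009,kasperski2016robust}). Since an FPTAS provides a $(1+\delta)$-approximation in time polynomial in the input size for any desired constant $\delta>0$, setting for instance $\delta=1$ immediately yields a 2-approximation algorithm for the min-max problem with fixed~$K$.

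Having verified the hypothesis, I would apply Theorem~\ref{thm_eps_k_approx_min_max} to each of the four problems in turn. For a given constant $\varepsilon>0$ the theorem produces a polynomial-time algorithm that aggregates the $K$ scenarios down to $2^{\ell'}$ scenarios with $\ell' = \lceil \log (1/\varepsilon) + 1 \rceil$, runs the fixed-$K$ 2-approximation on the aggregated instance, and returns a solution whose guarantee with respect to the original scenario set~$\cU$ is at most $\varepsilon K$.

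There is no real obstacle here: the whole corollary is a bookkeeping step on top of Theorem~\ref{thm_eps_k_approx_min_max}. The only thing to be mildly careful about is that we need a constant-factor (in particular, 2-) approximation for fixed $K$, not merely pseudopolynomial exact algorithms; this is exactly what is provided by the FPTAS results listed under ``Fix $K$'' in Table~\ref{results}, and in particular is why the corollary does not cover min-max $s$-$t$~cut or min-max assignment, whose ``Fix $K$'' entries are unchecked.
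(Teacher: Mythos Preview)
Your proposal is correct and matches the paper's own reasoning: the corollary is stated without a formal proof, relying directly on Theorem~\ref{thm_eps_k_approx_min_max} together with the known FPTAS for fixed $K$ (the paper then illustrates this for selection and shortest path with explicit running times). Your observation about why $s$-$t$ cut and assignment are excluded is also exactly the point the table is making.
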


As examples, let us consider the min-max selection and shortest path problems. For both problems we need a 2-approximation algorithm for the min-max problem with a fixed number of scenarios.
\begin{itemize}
\item For selection, there exists an FPTAS that finds for a fixed number of scenarios $\tilde{K}$ a $(1+\tilde{\varepsilon})$-approximation with running time $\mathcal{O}\left(\frac{p^{\tilde{K}}n}{\tilde{\varepsilon}^{\tilde{K}-1}}\right)$ (see \cite{kasperski2007approximation}). Hence, using Theorem~\ref{thm_eps_k_approx_min_max}, an $(\varepsilon K)$-approximation is possible in $\mathcal{O}\left(p^{\frac{4}{\varepsilon}}n\right)$ by aggregating to $\tilde{K}= 2^{\lceil\log \frac{1}{\varepsilon} + 1 \rceil}$ scenarios, and approximating the resulting problem with a factor $2$, i.e., choosing $\tilde{\varepsilon} = 1$.

\item For shortest path, there exists an FPTAS that finds for a fixed number of scenarios $\tilde{K}$ a $(1+\tilde{\varepsilon})$-approximation with running time in $\mathcal{O}\left(\frac{nm^{\tilde{K}}}{\tilde{\varepsilon}^{\tilde{K}-1}}\right)$, which makes it possible to find an $(\varepsilon K)$-approximation in time $\mathcal{O}\left(mn^{\frac{4}{\varepsilon}}\right)$ by aggregating to $\tilde{K}= 2^{\lceil\log \frac{1}{\varepsilon} + 1 \rceil}$ scenarios, and approximating the resulting problem with a factor $2$, i.e. choosing~$\tilde{\varepsilon} = 1$.
\end{itemize}

\section{Min-Max Regret Approximation}
\label{minmaxregret}

To translate the results obtained for min-max to min-max regret problems we need to modify the aggregation procedure, as the following example shows.

\begin{example}\label{ex1}
Consider the min-max regret shortest path instance shown in Figure~\ref{ex1-1}. There are four scenarios. An optimal solution is to take the path in the middle with a regret of $1$. If we aggregate the first two and the last two scenarios, we arrive at the instance shown in Figure~\ref{ex1-2}. Here, an optimal solution (with perceived regret $1$) is to take the top path; but its true regret is $4$. Hence, in this example, we obtain only a $4$-approximation and not a $2$-approximation as in the case of the min-max objective function.
\end{example}

\begin{figure}[htbp]
\centering
\begin{subfigure}[b]{0.45\textwidth}
\centering
\begin{tikzpicture}[scale= 1.5,
     every state/.style={ 
       minimum size=2em, 
       fill=white, 
       text=black 
     }, 
     node distance=3cm 
   ] 
   
	\tikzstyle{empty} = [rectangle,fill = white, minimum size = 0pt,inner sep=0pt];

	\node[state] (s) at (0,0) {$s$}; 
    \node[state] (t) at (3,0) {$t$}; 
	
	\node[empty] (e1) at (1.5,1.0) {$(4,0,0,0)$};    
	\node[empty] (e2) at (1.5,-0.5) {$(0,4,0,0)$};    
   
	\path[->] (s) edge node[above] {$(1,1,1,1)$} (t);     
	\path[->, bend left = 50] (s) edge (t);
	\path[->, bend right = 50] (s) edge (t);
    	
\end{tikzpicture}
\subcaption{Instance with 4 scenarios.}
\label{ex1-1}
\end{subfigure}
\begin{subfigure}[b]{0.45\textwidth}
\centering
\begin{tikzpicture}[ scale= 1.5,
     every state/.style={ 
       minimum size=2em, 
       fill=white, 
       text=black 
     }, 
     node distance=3cm 
   ] 
   
	\tikzstyle{empty} = [rectangle,fill = white, minimum size = 0pt,inner sep=0pt];

	\node[state] (s) at (0,0) {$s$}; 
    \node[state] (t) at (3,0) {$t$}; 
	
	\node[empty] (e1) at (1.5,1.0) {$(2,0)$};    
	\node[empty] (e2) at (1.5,-0.5) {$(2,0)$};    
        
	\path[->] (s) edge node[above] {$(1,1)$} (t);     
	\path[->, bend left = 50] (s) edge (t);
	\path[->, bend right = 50] (s) edge (t);
    	
\end{tikzpicture}
\subcaption{Instance with aggregated scenarios}
\label{ex1-2}
\end{subfigure}

\caption{An example instance where aggregating from four to two scenarios leads to a $4$-approximation for min-max regret.}
\end{figure}
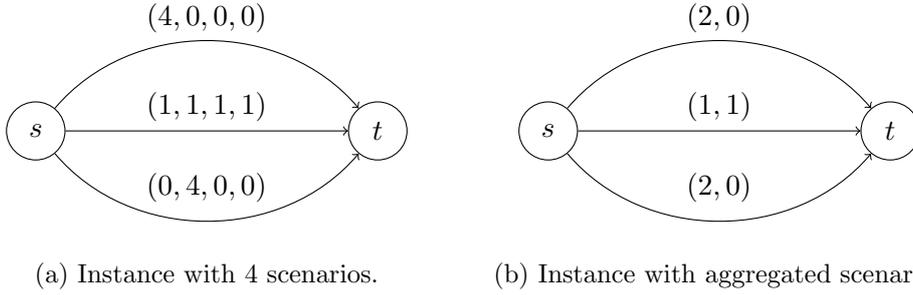

Instead of simply aggregating pairs $S_j = \{j_1,j_2\}$ of scenarios and solving a min-max regret problem on this new scenario set, we consider the following problem 
\begin{align*}
\min_{x\in\X} \max_{j\in[K/2]} \overline{c}_j^t x - \frac{1}{2} opt(c_{j_1}) - \frac{1}{2} opt(c_{j_2}) \tag{*} 
\end{align*}
Note that we do not use the objective $\overline{c}^t_j - opt(\overline{c}_j)$, as would be usual for min-max regret. Further generalizing, we call a problem
\[ \min_{x\in\X} \max_{i\in[K]} c_i^t x - d(c_i) \]
with arbitrary $d$ a generalized min-max regret problem.

\begin{lemma}\label{lem_min_max_regret_half}
Solving the generalized min-max regret problem (*) on $\overline{\cU} = \{ \overline{c}_1, \ldots, \overline{c}_{K/2} \}$ is a 2-approximation for $\cU$.
\end{lemma}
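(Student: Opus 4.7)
The plan is to mirror the two-step argument from Lemma~\ref{lem_min_max_half}, inserting one extra manipulation to absorb the regret subtraction. Write $\varphi(x) = \max_{j\in[K/2]}\bigl(\overline{c}_j^t x - \tfrac{1}{2}opt(c_{j_1}) - \tfrac{1}{2}opt(c_{j_2})\bigr)$ for the objective of (*); let $\overline{x}$ be an optimal solution of (*) and $x^*$ an optimal solution of the min-max regret problem on $\cU$. The target is the chain
\[\max_{i\in[K]}\bigl(c_i^t \overline{x} - opt(c_i)\bigr) \;\leq\; 2\varphi(\overline{x}) \;\leq\; 2\varphi(x^*) \;\leq\; 2\max_{i\in[K]}\bigl(c_i^t x^* - opt(c_i)\bigr).\]
The middle inequality is immediate from optimality of $\overline{x}$ for (*), so only the outer two need argument.

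For the first inequality, I would pick $i^* \in \operatorname{argmax}_{i\in[K]}(c_i^t\overline{x} - opt(c_i))$ and the index $j^* \in [K/2]$ with $i^* \in S_{j^*} = \{j^*_1, j^*_2\}$, and by relabeling assume $i^* = j^*_1$. The key observation is that the true regret of $\overline{x}$ against the partner scenario $c_{j^*_2}$ is nonnegative by definition of $opt$, i.e.\ $c_{j^*_2}^t \overline{x} - opt(c_{j^*_2}) \geq 0$. Adding this nonnegative quantity and regrouping the result as an average gives
\[c_{i^*}^t \overline{x} - opt(c_{i^*}) \;\leq\; c_{j^*_1}^t \overline{x} + c_{j^*_2}^t \overline{x} - opt(c_{j^*_1}) - opt(c_{j^*_2}) \;=\; 2\Bigl(\overline{c}_{j^*}^t \overline{x} - \tfrac{1}{2}opt(c_{j^*_1}) - \tfrac{1}{2}opt(c_{j^*_2})\Bigr) \;\leq\; 2\varphi(\overline{x}).\]

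For the third inequality, I would observe that for every $j \in [K/2]$
\[\overline{c}_j^t x^* - \tfrac{1}{2}opt(c_{j_1}) - \tfrac{1}{2}opt(c_{j_2}) \;=\; \tfrac{1}{2}\bigl(c_{j_1}^t x^* - opt(c_{j_1})\bigr) + \tfrac{1}{2}\bigl(c_{j_2}^t x^* - opt(c_{j_2})\bigr),\]
an average of two true regrets of $x^*$, and therefore at most $\max_{i\in[K]}(c_i^t x^* - opt(c_i))$. Taking the maximum over $j$ yields $\varphi(x^*) \leq \max_i(c_i^t x^* - opt(c_i))$, and chaining the three bounds delivers the $2$-approximation.

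The only delicate point, and the reason the naive choice $\overline{c}_j^t x - opt(\overline{c}_j)$ fails as Example~\ref{ex1} shows, is the design of the offset $\tfrac{1}{2}opt(c_{j_1}) + \tfrac{1}{2}opt(c_{j_2})$: it is precisely this offset that lets the aggregated objective decompose as an average of true regrets in the third step, so that one pays no factor beyond the $2$ already incurred in the first step. Beyond this structural choice the proof uses only the two elementary facts $a \leq a + b$ when $b \geq 0$, and $\tfrac{1}{2}(a+b) \leq \max\{a,b\}$, exactly as in Lemma~\ref{lem_min_max_half}.
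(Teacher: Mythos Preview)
Your argument is correct and follows essentially the same route as the paper's proof: pick the worst scenario $i^*$ for $\overline{x}$, add the nonnegative partner regret to pass to $2\varphi(\overline{x})$, invoke optimality of $\overline{x}$ for (*), and then decompose each aggregated term as an average of two true regrets of $x^*$. The only difference is presentational---you package the three inequalities via the auxiliary function $\varphi$, whereas the paper writes out a single chain of inequalities---but the underlying steps and the key use of $c_{j^*_2}^t\overline{x} - opt(c_{j^*_2}) \ge 0$ are identical.
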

\begin{proof}
Let $\overline{x}$ be optimal for problem (*), and $x^*$ optimal for the original problem with uncertainty set $\cU$. Again, denote by  $i^* = \operatorname{argmax}_{i\in[K]}c_i^t\overline{x} -opt(c_i)$ and choose $j^*$ such that $i^* \in S_{j^*}= \{j^*_1,j^*_2\}$. Then 
\begin{align*}
\max_{i\in[K]} c_i^t \overline{x} - opt(c_i) &= c_{i^*}^t \overline{x} - opt(c_{i^*})  \\
&\leq c_{j^*_1}^t \overline{x} - opt(c_{j^*_1}) + c_{j^*_2}^t \overline{x} - opt(c_{j^*_2}) \\
&\leq \max_{j\in[K/2]}  ( c_{j_1} + c_{j_2} )^t \overline{x} - opt(c_{j_1}) - opt(c_{j_2}) \\
&= 2 \max_{j\in[K/2]}  \overline{c}_j^t \overline{x} - \frac{1}{2}opt(c_{j_1}) - \frac{1}{2}opt(c_{j_2})  \\
&\leq  2 \max_{j\in[K/2]}  \overline{c}_j^t x^* - \frac{1}{2}opt(c_{j_1}) - \frac{1}{2}opt(c_{j_2})  \\
&= \max_{j\in[K/2]}c_{j_1}^t x^* - opt(c_{j_1})  + c_{j_2}^t x^* -  opt(c_{j_2}) \\
& \leq 2 \max_{i\in[K]}  c_i^t x^* - opt(c_i) 
\end{align*}
\end{proof}

Note that the arguments used in the proof of Lemma~\ref{lem_min_max_regret_half} can be generalized to the case where  scenarios are aggregated repeatedly, i.e., we aggregate to sets $S_j$ with more than two elements. Similar to Corollary~\ref{cor} we obtain:

\begin{corollary}
Given an aggregated scenario set $\overline{\cU}(\ell)$ where each of the $2^\ell$ scenarios is given as $\overline{c}_j := \frac{1}{2^{k-\ell}}\sum_{s \in S_j} c_s$.  The optimal solution of the generalized min-max regret problem
\begin{align*}
\min_{x\in\X} \max_{j\in[2^\ell]} \overline{c}_j x - \frac{1}{2^{k-\ell}}  \sum_{s \in S_j} opt(c_k)
\end{align*}
yields an $(K/2^\ell)$-approximation of the min-max regret problem with scenario set $\mathcal{U}$.
\label{cor2}
\end{corollary}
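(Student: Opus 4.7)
The plan is to mimic the chain of inequalities in the proof of Lemma~\ref{lem_min_max_regret_half}, replacing the pair sum by a sum over the full block. Let $r := 2^{k-\ell} = K/2^\ell$ denote the common block size, let $\overline{x}$ be optimal for the generalized regret problem on $\overline{\cU}(\ell)$, and let $x^*$ be optimal for the min-max regret problem on $\cU$. Set $i^* = \operatorname{argmax}_{i\in[K]} \bigl(c_i^t \overline{x} - opt(c_i)\bigr)$ and let $j^*$ be the index of the block $S_{j^*}$ containing $i^*$.

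First I would bound the worst-case regret of $\overline{x}$ by the sum of regrets over the entire block $S_{j^*}$. This step is valid because $opt(c_s) \le c_s^t \overline{x}$ for every $s$, so every individual regret term in the sum is nonnegative and the single term $c_{i^*}^t \overline{x} - opt(c_{i^*})$ is dominated by the full block sum. Then I would pass to the maximum of such block sums over $j\in[2^\ell]$, and use the key identity $\sum_{s\in S_j} c_s = r\, \overline{c}_j$ to rewrite this maximum as $r$ times the objective of the generalized regret problem (*) at $\overline{x}$, namely $r \max_{j} \bigl( \overline{c}_j^t \overline{x} - \tfrac{1}{r}\sum_{s\in S_j} opt(c_s)\bigr)$.

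Next I would invoke optimality of $\overline{x}$ to replace $\overline{x}$ by $x^*$ in this expression; this is legitimate because the subtracted term $\tfrac{1}{r}\sum_{s\in S_j} opt(c_s)$ depends only on the scenarios, not on the chosen solution, so it acts as a per-$j$ constant exactly as in the pair case. Reversing the identity unfolds the right-hand side into $\max_{j}\sum_{s\in S_j}(c_s^t x^* - opt(c_s))$, and finally each such sum of $r$ nonnegative regret terms is at most $r\max_{i\in[K]}\bigl(c_i^t x^* - opt(c_i)\bigr)$, yielding the claimed factor $r = K/2^\ell$.

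The only step requiring any care is the bookkeeping of the factor $r$ as it appears (when absorbing $\sum_{s\in S_j} c_s$ into $\overline{c}_j$) and then cancels (when re-expanding on the $x^*$ side); the nonnegativity of individual regret terms, used in both the first and last inequalities, is automatic. Beyond this, the argument is a direct extension of Lemma~\ref{lem_min_max_regret_half} with pairs replaced by $r$-element blocks, and no new idea is needed.
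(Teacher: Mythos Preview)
Your proposal is correct and is exactly the approach the paper intends: the paper does not spell out a proof for this corollary but simply remarks that the chain of inequalities from Lemma~\ref{lem_min_max_regret_half} generalizes from pairs to blocks of size $r=2^{k-\ell}$, which is precisely what you do. Your explicit use of the nonnegativity of individual regret terms (to dominate a single term by the block sum) makes transparent a step that the paper leaves implicit even in the pair case.
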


Similar to min-max, we can use a 2-approximation for the generalized min-max regret problem with a fixed number of scenarios to obtain an $(\epsilon K)$-approximation for the min-max regret problem. Using the same arguments as in the proof of Theorem~\ref{thm_eps_k_approx_min_max} we obtain:

\begin{theorem}
Let a constant $0 < \varepsilon \leq 1$ be given. If there exists a 2-approximation algorithm for the generalized min-max regret problem with a fixed number of scenarios, then there exists a polynomial-time algorithm that gives an $(\varepsilon K)$-approximation for the min-max regret problem.
\label{thm_eps_k_approx_min_max_regret}
\end{theorem}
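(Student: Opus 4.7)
The plan is to mirror the proof of Theorem~\ref{thm_eps_k_approx_min_max} line by line, swapping out the appeal to Corollary~\ref{cor} for Corollary~\ref{cor2}, and replacing the 2-approximation for the min-max problem with the assumed 2-approximation for the \emph{generalized} min-max regret problem. The strategy, as before, is to pick the aggregation level $\ell'$ just large enough that the resulting $K/2^{\ell'}$ factor from aggregation, combined with the factor $2$ lost when solving only approximately, stays below $\varepsilon K$.

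Concretely, I would first set $\ell' := \lceil \log \tfrac{1}{\varepsilon} + 1 \rceil$, which is a constant since $\varepsilon$ is fixed, and note that $2^{\ell'} \geq 2/\varepsilon$. I then build the aggregated set $\overline{\cU}(\ell')$ with its $2^{\ell'}$ scenarios, and also precompute the $K$ values $opt(c_s)$ for $s\in[K]$, from which the aggregated shift terms $\tfrac{1}{2^{k-\ell'}}\sum_{s\in S_j}opt(c_s)$ appearing in Corollary~\ref{cor2} are assembled; this is polynomial-time work. With this data in hand, I invoke the hypothesized 2-approximation algorithm for the generalized min-max regret problem on the $2^{\ell'}$-scenario instance, obtaining some solution $\hat{x}$. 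Corollary~\ref{cor2} says that an \emph{optimal} solution of this generalized problem is a $(K/2^{\ell'})$-approximation for the original min-max regret problem on $\cU$; composing this with the factor $2$ from the approximation algorithm yields, for $\hat{x}$, an approximation ratio of $2 K/2^{\ell'} \leq \varepsilon K$, as required. The overall runtime is polynomial because $\ell'$ is a fixed constant, so the approximation algorithm is called on an instance of constant size in the number of scenarios.

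The main subtlety, and the only place a mechanical copy of Theorem~\ref{thm_eps_k_approx_min_max}'s proof could go wrong, is verifying that Corollary~\ref{cor2} still composes multiplicatively with an approximate (rather than exact) solution of the aggregated generalized problem. Looking at the chain of inequalities underlying Lemma~\ref{lem_min_max_regret_half} and its iterated version in Corollary~\ref{cor2}, the step where optimality of $\overline{x}$ for the aggregated problem is used is precisely a single ``$\leq$''; replacing it by ``$\leq 2\cdot$'' (i.e. $\hat{x}$ being a 2-approximation rather than optimal) simply inserts one extra factor of $2$ and leaves the rest of the chain unchanged. Provided regrets are nonnegative, so that multiplicative approximation is meaningful on both the original and the generalized problem, this gives the combined $2\cdot(K/2^{\ell'})$ guarantee cleanly. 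Hence the proof reduces to this short bookkeeping argument, and no new technical ingredient beyond Corollary~\ref{cor2} is needed.
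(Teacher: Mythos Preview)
Your proposal is correct and follows exactly the route the paper takes: the paper itself simply states that ``using the same arguments as in the proof of Theorem~\ref{thm_eps_k_approx_min_max}'' one obtains the result, with Corollary~\ref{cor2} playing the role of Corollary~\ref{cor}. Your additional remark about why the factor~$2$ from the approximate solver composes multiplicatively with the $K/2^{\ell'}$ factor (i.e.\ the regret analogue of Lemma~\ref{lemma2}) is a point the paper leaves implicit, so your write-up is in fact more complete than the paper's own treatment.
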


Note that in our construction of the generalized problem, we have $d(c_i) \le opt(c_i)$. Hence, we can use the same proof as in \cite{Aissi2007281} using the FPTAS for multi-objective spanning tree to show that there is an FPTAS for our generalized min-max regret spanning tree problem. The same approach applies to the min-max regret selection problem.

Furthermore, we can modify any generalized min-max regret shortest path problem by adding an edge from $s$ to $t$ with costs $d(c_i)$ in scenario $i$. We create an additional scenario where the costs of each edge is 0, and the costs of the new edge is a sufficiently large value $M$. As $d(c_i) \le opt(c_i)$, we can then solve a classic min-max regret problem on this instance, giving the same objective value as before. Hence, the FPTAS for min-max regret shortest path (see \cite{Aissi2007281}) can also be applied to our generalized problem.

\begin{corollary}
For the min-max regret shortest path, spanning tree, and selection problem with unbounded number of scenarios $K$, there exists a polynomial-time $(\varepsilon K)$-approximation algorithm for any fixed $\varepsilon > 0$.
\end{corollary}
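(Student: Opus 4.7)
The plan is to invoke Theorem~\ref{thm_eps_k_approx_min_max_regret} separately for each of the three problems, which reduces everything to exhibiting a $2$-approximation (in fact an FPTAS) for the \emph{generalized} min-max regret problem with a fixed number of scenarios. The paragraphs immediately preceding the corollary already sketch exactly the tools I would cite, so my proof would consist of organizing them into one statement.

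First, I would handle the spanning tree and selection cases together. The key observation is that in our construction of (*) and its $\ell$-fold iterate in Corollary~\ref{cor2}, the shift $d(c_i) = \frac{1}{2^{k-\ell}} \sum_{s \in S_j} opt(c_s)$ (for the $j$ with $i \in S_j$) satisfies $d(c_i) \le opt(c_i)$, because $opt(c_i)$ is one of the terms in a convex combination with non-negative entries. This is the only property used in the Aissi--Bazgan--Vanderpooten FPTAS of \cite{Aissi2007281}: the min-max regret bound is approximated via an FPTAS for the multi-objective version, and non-negativity of the shift is all that is needed for the approximation ratio to carry through. Hence the same proof yields an FPTAS, and in particular a $2$-approximation, for the generalized min-max regret spanning tree and selection problems with fixed~$K$, which by Theorem~\ref{thm_eps_k_approx_min_max_regret} gives the desired $(\varepsilon K)$-approximation.

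The shortest path case requires the extra reduction that was previewed in the paragraph above the corollary. I would augment the graph by a single parallel edge $e^*$ from $s$ to $t$ whose cost in scenario $i$ is $d(c_i)$, and I would add one new scenario in which every original edge has cost $0$ and $e^*$ has cost $M$ for some sufficiently large~$M$. Because $d(c_i) \le opt(c_i)$, the new optimal value in scenario $i$ equals $d(c_i)$, so for any $s$-$t$ path $x$ that does not use $e^*$, its classical regret in scenario $i$ in the augmented instance equals $c_i^t x - d(c_i)$, i.e., the generalized regret. In the new dummy scenario the regret of any such $x$ is $0$, and the single-edge path $e^*$ is excluded from being optimal by the choice of~$M$. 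Thus the classical FPTAS for min-max regret shortest path from \cite{Aissi2007281} applied to the augmented instance yields an FPTAS for the generalized problem with fixed $K$, and Theorem~\ref{thm_eps_k_approx_min_max_regret} completes the argument.

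The only delicate point, and therefore the main obstacle I would check carefully, is that the $d(c_i)$ arising from iterated aggregation in Corollary~\ref{cor2} is non-negative and bounded by $opt(c_i)$ simultaneously for \emph{every} $i$, so that both the shortest-path reduction and the multi-objective argument actually apply. Since costs are assumed non-negative in these combinatorial problems and $opt$ is a concave function of the cost vector, this holds; once it is verified, the corollary reduces to combining Theorem~\ref{thm_eps_k_approx_min_max_regret} with the three FPTAS constructions above.
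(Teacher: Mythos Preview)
Your proposal is correct and follows exactly the approach laid out in the paragraphs preceding the corollary: invoke Theorem~\ref{thm_eps_k_approx_min_max_regret}, use the FPTAS of \cite{Aissi2007281} together with $d(\overline{c}_j)\le opt(\overline{c}_j)$ for spanning tree and selection, and the extra-edge-plus-dummy-scenario reduction for shortest path. One small correction: your first justification of $d\le opt$ (``$opt(c_i)$ is one of the terms in a convex combination with non-negative entries'') does not prove the inequality, since an average can exceed any individual term; the valid reason is the concavity of $opt$ as a minimum of linear functions, which you correctly invoke at the end.
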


\section{Computational Experiments}
\label{experiment}

In this section, we present a small test of the proposed aggregation method on randomly generated instances. As benchmark problem we use the shortest path problem. We define a complete layered graph $G$ with $10$ layers and width $4$. The scenario set consists of $16$ randomly generated scenarios. The cost of each edge is chosen uniformly in $[0,1]$ for each scenario. In the first step, we begin with the full set of scenarios. Next, we half the number of scenarios by aggregating them pairwise. This is repeated till we end up with a single scenario. In each step, we solve the min-max shortest path problem with the corresponding set of scenarios. To solve these problems, we solve an IP formulation of the problem using CPLEX. Note that in the last step, where the uncertainty set consists only of a single scenario, only a classic shortest path problem needs to be solved. At the end, we evaluate the performance of the computed paths by computing their worst case cost using the original set of $16$ scenarios. To average the results we divide the performance of each solution by the performance of the optimal solution. 

The aggregation scheme presented in Figure~\ref{figex} proposes to aggregate always two consecutive scenarios. This is arbitrary, and any other aggregation rule can be used in practice to improve the performance of the method. Beside the aggregation of consecutive scenarios, we also tested to aggregate similar scenarios. To this end, we computed a perfect matching between the different scenarios. We set the cost of matching scenario $i$ with scenario $j$ to the euclidean distance of scenario $i$ and $j$. 

The results of the experiment, averaged over 1000 instances, are shown in Figure~\ref{fig_approx_exp}.

\begin{figure}
\centering
\includegraphics[width=0.8\textwidth,clip, trim = 60 50 90 70]{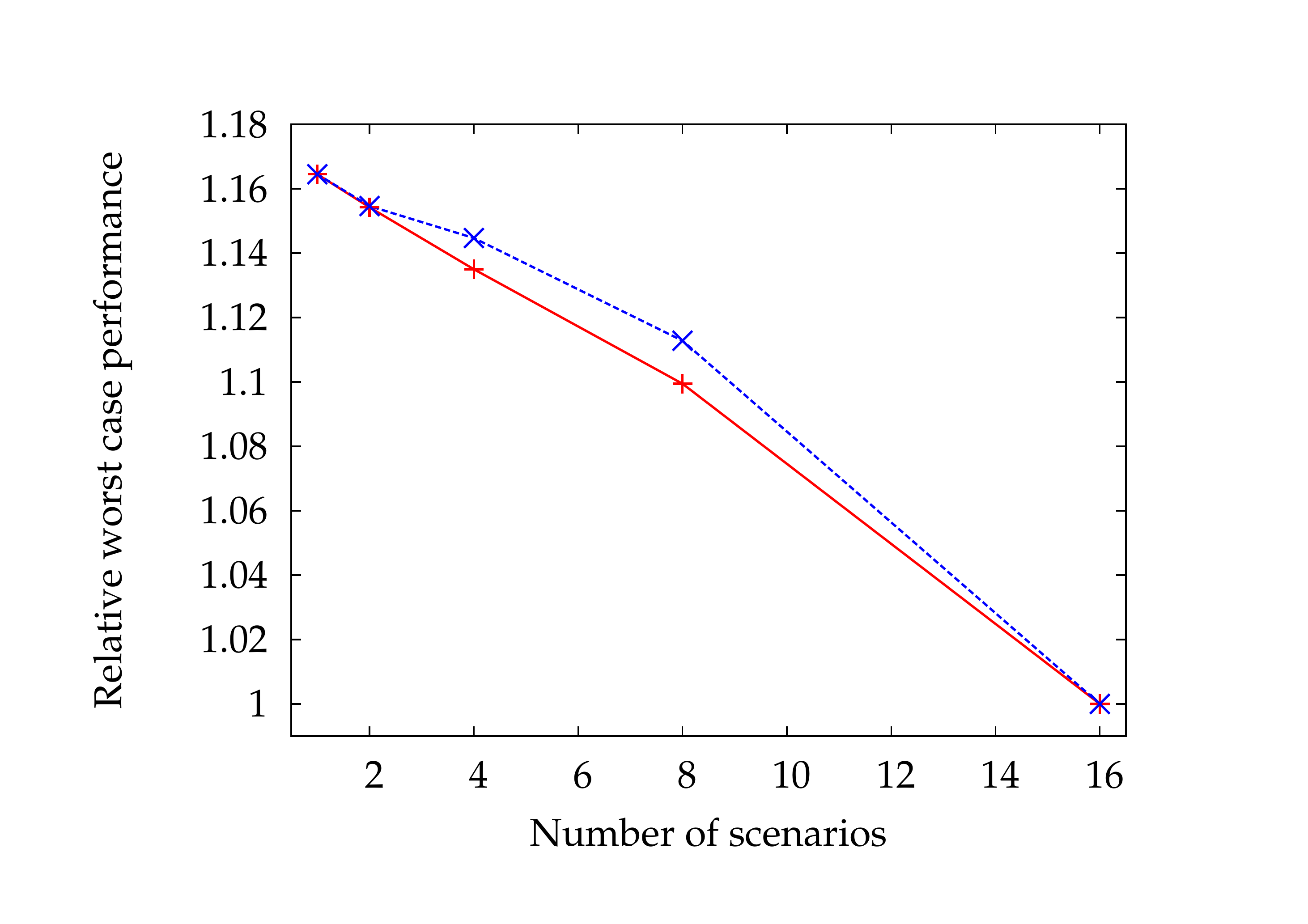}
\caption{The horizontal axis gives the number of scenarios that are used for the different aggregation levels. The relative worst case performance of the different solutions for the different levels of aggregation is shown on the vertical axis. The red straight line shows the performance when aggregating similar scenarios, and the blue dashed line shows the performance of the consecutive aggregation scheme.}
\label{fig_approx_exp}
\end{figure}

It can be seen that the more involved aggregation rule based on scenario similarity does indeed gives better results for intermediate aggregation levels. For full or no aggregation, the aggregation rule is of course irrelevant. Note also that the relative performance of the aggregated solutions is far below the theoretical performance guarantee. Interestingly, for the aggregation scheme based on similarity, there seems to exist a roughly linear relationship between the number of scenarios and the relative worst case performance.

\section{Conclusions}
\label{conc}

The midpoint method is a central approximation algorithm in robust optimization. Despite its simplicity, is has been the best-known method for several classic combinatorial problems. In this paper we presented a simple variant of the method, where the uncertainty set is not aggregated to a single scenario, but to a sufficiently small set of scenarios instead. This reduced scenario set is then approximated using, e.g., an FPTAS for discrete uncertainty of constant size. Our approach can be used to find polynomial time $(\varepsilon K)$-approximations for any constant $\varepsilon\in[0,1]$, thus improving several currently known best approximability results.

Our results hold for any aggregation scheme. However, for practical purposes, aggregating similar scenarios is reasonable, so as to preserve the structure of the uncertainty set as far as possible. To quantify this effect we considered a computational experiment using random shortest path instances. Our results indicate that approximation guarantees are considerably smaller on these instances than the theoretical bounds suggest, and that aggregating similar scenarios does indeed improve the quality of solutions.


\begin{thebibliography}{ABV09}

\bibitem[ABV07]{Aissi2007281}
H.~Aissi, C.~Bazgan, and D.~Vanderpooten.
\newblock Approximation of min–max and min–max regret versions of some
  combinatorial optimization problems.
\newblock {\em European Journal of Operational Research}, 179(2):281 -- 290,
  2007.

\bibitem[ABV09]{Aissi2009}
H.~Aissi, C.~Bazgan, and D.~Vanderpooten.
\newblock Min–max and min–max regret versions of combinatorial optimization
  problems: A survey.
\newblock {\em European Journal of Operational Research}, 197(2):427 -- 438,
  2009.

\bibitem[CG15]{chassein2015new}
A.~Chassein and M.~Goerigk.
\newblock A new bound for the midpoint solution in minmax regret optimization
  with an application to the robust shortest path problem.
\newblock {\em European Journal of Operational Research}, 244(3):739--747,
  2015.

\bibitem[Con12]{conde2012constant}
E.~Conde.
\newblock On a constant factor approximation for minmax regret problems using a
  symmetry point scenario.
\newblock {\em European Journal of Operational Research}, 219(2):452--457,
  2012.

\bibitem[KY97]{KouYu97}
P.~Kouvelis and G.~Yu.
\newblock {\em Robust Discrete Optimization and Its Applications}.
\newblock Kluwer Academic Publishers, 1997.

\bibitem[KZ06]{kasperski2006approximation}
A.~Kasperski and P.~Zieli{\'n}ski.
\newblock An approximation algorithm for interval data minmax regret
  combinatorial optimization problems.
\newblock {\em Information Processing Letters}, 97(5):177--180, 2006.

\bibitem[KZ07]{kasperski2007approximation}
A.~Kasperski and P.~Zieli{\'n}ski.
\newblock Approximation of min-max (regret) combinatorial optimization problems
  under discrete scenario representation.
\newblock Technical report, Institute of Mathematics and Computer Science,
  Wroclaw University of Science and Technology, PRE 7, 2007.

\bibitem[KZ16]{kasperski2016robust}
A.~Kasperski and P.~Zieli{\'n}ski.
\newblock Robust discrete optimization under discrete and interval uncertainty:
  A survey.
\newblock In {\em Robustness Analysis in Decision Aiding, Optimization, and
  Analytics}, pages 113--143. Springer, 2016.

\end{thebibliography}
\end{document}